\newtheorem{proposition}{Proposition}
\newtheorem{theorem}{Theorem}
\begin{document}

\date{}
\title{\Large \textbf{{Pure cross-diffusion models:\\ Implications for traveling wave solutions}} }

\author{F. Berezovskaya$^{1,}$\footnote{e-mail: fberezovskaya@howard.edu}\,, G. Karev$^2$, and A. Novozhilov$^2$\\[2mm]
{\small \textit{$^1$Howard University, $6^{\mbox{th}}$ Str., Washington, DC 20059 USA}} \\
{\small $^2$\textit{National Institutes of Health, 8600 Rockville
Pike, Bethesda, MD 20894 USA}}} \maketitle

\begin{abstract}  An analysis of traveling wave solutions of pure
cross-diffusion systems, i.e., systems that lack reaction and
self-diffusion terms, is presented. Using the qualitative theory of
phase plane analysis the conditions for existence of different types
of wave solutions are formulated. In particular, it is shown that
family of wave trains is a generic phenomenon in pure
cross-diffusion systems. The results can be used for construction
and analysis of different mathematical models describing systems
with directional movement.
\paragraph{Keywords:}PDE systems, cross-diffusion, traveling waves, traveling trains, taxis
\paragraph{AMS (MOS) subject classification:}34C05, 34C25, 35Q80, 92C17
\end{abstract}

\section{Introduction}

Using so-called cross-diffusion terms in mathematical models in
natural sciences, in particular in biology, is nowadays a routine
trick to incorporate into classical models additional effects such
as taxis movement, crowding, pursuit, or evasion (for a review see
\cite{Tsyganov2007}). Two particular examples, which received
especial attention amongst other models, are the Keller--Segel model
of chemotactic movement
\cite{patlak1953rwp,Keller1971a,Horstmann2004} (\textit{taxis} is
defined as behavioral response to a directional stimulus), and the
Lotka--Volterra type cross-diffusion model to describe predator
pursuit of an evading prey
\cite{Kerner1959,Jorne1977,Dubey2001,horstmann2007rsl}.

A distinct feature of many studies on cross-diffusion models (PDE
systems where diffusion matrix is not diagonal) is that usually a
particular model is analyzed, where terms of reaction and diffusion
coefficients are given in fixed explicit form; notable exceptions
are given by
\cite{Berezovskaya1999b,Horstmann2004,BerezovskayaInPress}. It also
appears that almost all the models considered contain reaction
terms, only one exception we know of notwithstanding
\cite{Hadeler2007}, although it might be argued that systems without
reaction terms possess numerous solutions which may be relevant in
biological modeling.

Taking a step further, in this note we consider the systems that
lack the self-diffusion terms; PDE systems we analyze possess only
cross-diffusion (as, e.g., in \cite{Cartwright1997,Kuznetsov1994},
but with nonlinear cross-diffusion coefficients). We dub such
systems without reaction and self-diffusion terms \textit{pure
cross-diffusion systems}. In particular, we study the PDE system in
the following form:
\begin{equation}\label{cd1:1}
    \begin{split}
    u_t & =d_1(f(u,v)v_x)_x,\\
     v_t   &= d_2(g(u,v)u_x)_x,
\end{split}
\end{equation}
where $u=u(x,t),\,v=v(x,t),\,-\infty<x<\infty,\,t\geqslant 0,$ $d_1,
d_2$ are real constants, and $f(u,v),\,g(u,v)$ are arbitrary
functions whose properties will be specified later. Qualitatively
system \eqref{cd1:1} describes mutually dependent movement of
conserved entities $u$ and $v$ on gradients of their respective
counterparts. Of particular interest for us are the
\textit{traveling wave solutions} of system \eqref{cd1:1}, which,
e.g., can describe a replacement process of one species by another
species. As usual, a wave solution of \eqref{cd1:1} is a bounded
solution having the form
$u(x,t)=u(x+ct)=u(\xi),\,v(x,t)=v(x+ct)=v(\xi)$, where $c$ is the
wave speed. On substituting these traveling wave forms into
\eqref{cd1:1} we obtain the ODE wave system
\begin{equation}\label{cd1:2}
    \begin{split}
    cu_{\xi} & = d_1(f(u,v)v_{\xi})_{\xi},\\
    cv_{\xi} & =d_2(g(u,v)u_{\xi})_{\xi}.
\end{split}
\end{equation}
Note that we can integrate system \eqref{cd1:2}, and therefore,
\eqref{cd1:2} is essentially two dimensional, as the original PDE
system. Each wave solution of \eqref{cd1:1} has its counterpart as a
bounded orbit of \eqref{cd1:2}, which, due to the particular
structure of \eqref{cd1:2}, is an orbit on a phase plane. There is a
known correspondence between wave solutions of \eqref{cd1:1} and
bounded orbits of \eqref{cd1:2}: \textit{A wave front} in, e.g., $u$
component corresponds to a heteroclinic orbit of \eqref{cd1:2} that
connects singular points with different $u$ coordinates; \textit{A
wave impulse} in, e.g., $u$ component corresponds to a heteroclinic
orbit of \eqref{cd1:1} that connects singular points with identical
$u$ coordinates or to a homoclinic curve of a singular point;
\textit{A wave train} corresponds to a closed orbit of system
\eqref{cd1:2}.

In the next section we show that, under some natural restrictions on
functions $f(u,v)$ and $g(u,v)$, all possible traveling wave
solutions of \eqref{cd1:1} can be classified. In particular, a
family of wave trains of \eqref{cd1:1} is a generic phenomenon
intrinsic to \eqref{cd1:1}. Section 3 provides some examples and
extensions. Section 4 is devoted to discussion and conclusions.

\section{Wave solutions of the separable models}

In this note we confine ourselves to a particular case of
\eqref{cd1:1}, whereby the functions $f(u,v)$ and $g(u,v)$ can be
represented as products of functions that depend on one variable:
\begin{equation}\label{cd2:1}
f(u,v)=f_1(u)f_2(v),\,g(u,v)=g_1(u)g_2(v).
\end{equation}
Here the functions $f_i(x),\,g_i(x),\,i=1,2$ are smooth or rational.
The models of the form \eqref{cd1:1} for which \eqref{cd2:1} holds
will be termed \textit{separable pure cross diffusion models}, or
\textit{separable models} for short.

The rationale behind the choice of the from of the functions $f$ and
$g$ as in \eqref{cd2:1} is twofold. Firstly, such assumption allows
complete analysis of possible structures of the corresponding wave
system, and thus, exhaustive classification of possible wave
solutions of \eqref{cd1:1} as was done for different models in
\cite{BerezovskayaInPress,Berezovskaya2007}. Secondly, such forms
are abundant in the modeling literature being the consequence of a
biased random walk on the lattice (see, e.g., \cite{Othmer1997}).

For the separable model and after integration and rearrangement wave
system \eqref{cd1:2} takes the form
\begin{equation}\label{cd2:2}
    \begin{split}
    u_{\xi} & = cd_2(v-\eta)/g(u,v),\\
    v_{\xi} & = cd_1(u-\mu)/f(u,v),
\end{split}
\end{equation}
where $c\eta,\,c\mu$ are constants of integration that in general
depend on boundary conditions, and which are considered here as new
parameters of the system (see also \cite{BerezovskayaInPress}).
After the change of independent variable
$$
d\zeta=\frac{cd\xi}{d_1d_2f(u,v)g(u,v)}\,,
$$
which is well defined and smooth for any $(u,v)$ except where
$f(u,v)g(u,v)=0$ or where $f(u,v)$ or $g(u,v)$ are not defined (see
\cite{andronov1973qts}), we obtain
\begin{equation}\label{cd2:3}
    \begin{split}
    u_{\zeta} & = d_1f_1(u)f_2(v)(v-\eta),\\
    v_{\zeta} & = d_2g_1(u)g_2(v)(u-\mu).
\end{split}
\end{equation}
In the case if $f(u,v)$ or $g(u,v)$ are rational, the change of
independent variable is chosen such that the resulting wave system
is well defined for all $u$, $v$ (for examples see the next
section). Therefore, without loss of generality, we assume that our
wave system has the form
\begin{equation}\label{cd2:4}
    \begin{split}
    u_{\zeta} & = d_1\varphi_1(u)\varphi_2(v)(v-\eta),\\
    v_{\zeta} & = d_2\psi_1(u)\psi_2(v)(u-\mu).
\end{split}
\end{equation}
which is defined for all $u,\,v$.

First we note that system \eqref{cd2:4} is integrable. Function
\begin{equation*}
 H(u,v)=d_1\int(v-\eta)\frac{\varphi_2(v)}{\psi_2(v)}\,dv-{d_2}\int(u-\mu)\frac{\psi_1(u)}{\varphi_1(u)}\,du+\mbox{const}
\end{equation*}
represents its integral. We remark that under the assumption
$\varphi_1(u)=\mbox{const},\,\psi_2(v)=\mbox{const}$ system
\eqref{cd2:4} is hamiltonian, and $H(u,v)$ is its Hamiltonian, which
automatically yields that the singular points of \eqref{cd2:4} can
be only topological saddles or centers.

We assume that the following conditions of non-degeneracy are
fulfilled (these conditions are natural as far as we consider
$\eta,\,\mu,\,c$ the only parameters in the system \eqref{cd2:4}):
\begin{equation*}
    \begin{split}
(C1) &\quad \mbox{$\varphi_1(u)$ and $\psi_1(u)$ have no common roots;}\\
(C2) &\quad \mbox{$\varphi_2(v)$ and $\psi_2(v)$ have no common roots;}\\
(C3) &\quad \mbox{$\varphi_i(x)$ and $\psi_i(x),\, i=1,2,$ have no
multiple roots.}
\end{split}
\end{equation*}

System \eqref{cd2:4} always has singular point $(\mu,\,\eta)$.
Coordinates of other singular points are
$(\mu,\hat{v}_i),\,(\hat{u}_i,\,\eta),\,(\hat{u}_i,\hat{v}_i)$,
where $\hat{u}_i$ and $\hat{v}_i$ are the solutions of one of the
systems:
\begin{equation}\label{cd2:5}
    \varphi_1(u)=0,\quad \psi_2(v)=0,
\end{equation}
or
\begin{equation}\label{cd2:6}
    \varphi_2(v)=0,\quad \psi_1(u)=0.
\end{equation}

To infer the possible types of the singular points of \eqref{cd2:4}
consider the Jacobi matrix of \eqref{cd2:4}:
$DF(u,v)=(a_{ij})_{2\times 2},\,i,j=1,2$:
\begin{equation}\label{cd2:7}
    \begin{split}
a_{11}&=d_1(v-\eta)\varphi_2(v)(\varphi_1(u))'_u,\\
a_{12}&=d_1\varphi_1(u)((v-\eta)(\varphi_2(v))'_v+\varphi_2(v)),\\
a_{21}&=d_2\phi_2(v)((u-\mu)(\psi_1(u))'_u+\psi_1(u)),\\
a_{22}&=d_2(u-\mu)\psi_1(u)(\psi_2(v))'_v.
\end{split}
\end{equation}
Direct substitution of singular point coordinates into \eqref{cd2:7}
and evaluation of trace and determinant of $DF$ implies the
following proposition.

\begin{proposition}\label{pr2:1}Let us assume that conditions $(C1)$--$(C3)$
hold. We also assume that $\hat{u}_i$ \emph{(}$\hat{v}_i$\emph{)} do
not coincide with $\mu$ \emph{(}$\eta$, respectively\emph{)}. Then

1. Singular points of \eqref{cd2:4} satisfying \eqref{cd2:5} or
\eqref{cd2:6} can be saddles or nodes only;

2. Singular points of \eqref{cd2:4} where one or both coordinates
are $\eta$ or $\mu$ can be centers or saddles only.
\end{proposition}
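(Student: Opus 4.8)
The plan is to classify each singular point by substituting its coordinates into the Jacobian \eqref{cd2:7}, reading off the trace $T=a_{11}+a_{22}$ and determinant $D=a_{11}a_{22}-a_{12}a_{21}$, and then placing the point in the standard $(T,D)$ dichotomy: $D<0$ gives a saddle, $D>0$ with $T\neq 0$ gives a node (the eigenvalues being manifestly real whenever the Jacobian turns out triangular), and $T=0,\,D>0$ gives a center. All the bookkeeping is carried out by the non-degeneracy conditions: by $(C3)$ every root of $\varphi_i,\psi_i$ is simple, so the derivative of a vanishing factor is nonzero there; by $(C1)$--$(C2)$ the partner function does not vanish at the same root; and since the $\hat u_i,\hat v_i$ are exactly the roots of $\varphi_1,\psi_1$ and $\varphi_2,\psi_2$, the hypotheses $\hat u_i\neq\mu$, $\hat v_i\neq\eta$ guarantee $\varphi_1(\mu),\psi_1(\mu),\varphi_2(\eta),\psi_2(\eta)\neq 0$, so no unexpected cancellation occurs.

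I would clear item 2 first. At a point with a coordinate equal to $\mu$ or $\eta$ -- namely $(\mu,\eta)$, the points $(\mu,\hat v_i)$ with $\varphi_2(\hat v_i)=0$, and the points $(\hat u_i,\eta)$ with $\psi_1(\hat u_i)=0$ -- the factor $(v-\eta)$ appearing in $a_{11}$ and the factor $(u-\mu)$ appearing in $a_{22}$ (or the vanishing root factor) force $a_{11}=a_{22}=0$, while $a_{12}$ and $a_{21}$ survive and are nonzero by the remarks above. Hence $T=0$ and $D=-a_{12}a_{21}$, so the eigenvalues are $\pm\sqrt{a_{12}a_{21}}$: a real opposite-sign pair (saddle) when $a_{12}a_{21}>0$, and a purely imaginary pair (center) when $a_{12}a_{21}<0$. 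This is exactly the center-or-saddle alternative of item 2.

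For item 1 the two interior families must be handled separately. At a point of type \eqref{cd2:5} ($\varphi_1(\hat u_i)=0$, $\psi_2(\hat v_i)=0$) the factor $\varphi_1(\hat u_i)$ annihilates $a_{12}$ and the factor $\psi_2(\hat v_i)$ annihilates $a_{21}$, leaving a diagonal Jacobian with $a_{11},a_{22}\neq 0$. Its eigenvalues are then the two real numbers $a_{11}$ and $a_{22}$, which give a node when they share a sign and a saddle when they do not, but never a center. This disposes of the \eqref{cd2:5} half of item 1 with no further input.

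The crux -- and the step I expect to be the main obstacle -- is the type-\eqref{cd2:6} interior points ($\varphi_2(\hat v_i)=0$, $\psi_1(\hat u_i)=0$). Here the vanishing factors sit on the diagonal: $\varphi_2(\hat v_i)=0$ kills $a_{11}$ and $\psi_1(\hat u_i)=0$ kills $a_{22}$, so once again $T=0$, $D=-a_{12}a_{21}$, and the local linearization alone only distinguishes saddle from center. To land inside the saddle-or-node conclusion of item 1 one must show that every such point is in fact a saddle, i.e. that $D<0$, equivalently
\[
a_{12}a_{21}=d_1 d_2\,\varphi_1(\hat u_i)\,\psi_2(\hat v_i)\,(\hat u_i-\mu)(\hat v_i-\eta)\,\varphi_2'(\hat v_i)\,\psi_1'(\hat u_i)>0 .
\]
I would try to extract this sign from the geometry of the roots: the position of $\hat u_i$ relative to $\mu$ and of $\hat v_i$ relative to $\eta$, the sign of $\varphi_1,\psi_2$ on the relevant root interval, and the monotonicity at the simple roots encoded in $\varphi_2',\psi_1'$. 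The difficulty is that, unlike in the \eqref{cd2:5} case, this inequality is not forced by $(C1)$--$(C3)$ alone; it hinges on a global sign condition tying the coefficients $d_1,d_2$ to the relative placement of the integration constants $\mu,\eta$ among the roots. Resolving that sign is precisely what decides whether item 1 holds as stated for the \eqref{cd2:6} points, so it is where I would concentrate the real work.
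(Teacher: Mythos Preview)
Your approach is precisely what the paper does: its entire proof reads ``Direct substitution of singular point coordinates into \eqref{cd2:7} and evaluation of trace and determinant of $DF$ implies the following proposition.'' Your computations for item~2 and for the \eqref{cd2:5} half of item~1 are correct and constitute the intended argument.

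Your worry about the \eqref{cd2:6} points is not a gap in your reasoning; it is an imprecision in the proposition as stated. At such a point you have correctly found $a_{11}=a_{22}=0$, so the linearization gives center-or-saddle, and the sign inequality you wrote down, $d_1d_2\,\varphi_1(\hat u_i)\psi_2(\hat v_i)(\hat u_i-\mu)(\hat v_i-\eta)\varphi_2'(\hat v_i)\psi_1'(\hat u_i)>0$, simply does not follow from $(C1)$--$(C3)$. A one-line counterexample: take $\varphi_1\equiv 1$, $\varphi_2(v)=v$, $\psi_1(u)=u$, $\psi_2\equiv 1$; then the \eqref{cd2:6} point is $(0,0)$, the system is Hamiltonian, and $(0,0)$ is a genuine center whenever $d_1d_2\mu\eta<0$. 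So do not try to force the saddle conclusion---it cannot be done under the stated hypotheses. The honest classification is that \eqref{cd2:5} points are saddles or nodes, while \eqref{cd2:6} points, like those in item~2, are saddles or centers; this is entirely consistent with the integrability of \eqref{cd2:4} and with Remark~2, and it does not affect the downstream use of the proposition (the phase portrait is still organized into rectangular orbit cells bounded by the lines $u=\hat u_i$, $v=\hat v_i$, $u=\mu$, $v=\eta$).
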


\paragraph{Remark 1.} If we relax condition $(C3)$ retaining $(C1)$
and $(C2)$ then we have to add to the list of possible singular
points of \eqref{cd2:4} saddle-node points.

\paragraph{Remark 2.} To prove that singular points that have eigenvalues with
zero real part in linear approximation are indeed centers we can use
integrability of system \eqref{cd2:4} and Theorem 2 in \cite[p.
75]{Bautin1990}.\bigskip

From Proposition \ref{pr2:1}, integrability of system \eqref{cd2:4},
and special structure of the wave system \eqref{cd2:4} follows that
the structure of the phase plane, and, consequently, of the bounded
orbits of \eqref{cd2:4} are completely determined by the types and
mutual positions of singular points of \eqref{cd2:4}. The phase
plane is divided into orbit cells that represent either a
rectangular cell with orbit structure determined by the types of the
corner singular points (as in \cite{BerezovskayaInPress}), or a cell
filled with closed orbits surrounding singular point of the center
type (an example of such phase plane is given in Fig. \ref{f1:1}).
The former case corresponds to fronts or impulses of the initial
system \eqref{cd1:1}, and the latter to the wave train solutions of
system \eqref{cd1:1}.

\begin{figure}[tbh!]
\centering
\includegraphics[width=0.49\textwidth]{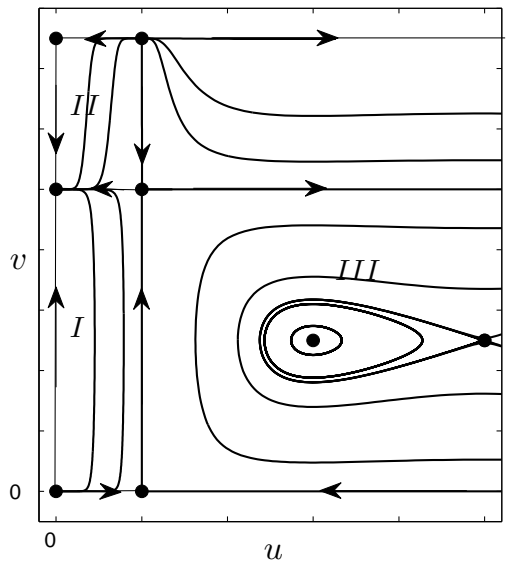}
\caption{A typical structure of the phase plane of the wave system
of the separable model \eqref{cd1:1}. Here the bold dots represent
singular points of \eqref{cd2:4}, the arrows show direction of the
vector field. The plane is divided into orbit cells that contain
families of bounded orbits: $I$ is a cell that corresponds to
impulse-front solutions; $II$ is a cell that corresponds to
front-front solutions; $III$ is a cell that corresponds to traveling
train solutions, the homoclinic surrounding the closed orbits
corresponds to impulse-impulse solutions}\label{f1:1}
\end{figure}

Summarizing, we obtain the basic result of the present note:
\begin{theorem}\label{th2:1}
The separable pure cross-diffusion system \eqref{cd1:1}, satisfying
$(C1)$--$(C3)$ can only possess traveling wave solutions of the
following kinds: i. impulse-front solutions; ii. front-front
solutions; iii. impulse-impulse solutions; iv. family of wave train
solutions.

The family of wave train solutions exits in domain $\Omega$ if
$f(u,v)$ and $g(u,v)$ are analytical in $\Omega$ and
$d_1d_2f(u,v)g(u,v)<0$ for $(u,v)\in\Omega$.
\end{theorem}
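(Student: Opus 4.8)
The plan is to prove the two assertions of the theorem separately. The classification (i)--(iv) is, by this point, mostly a matter of assembling Proposition~\ref{pr2:1} with the integrability of \eqref{cd2:4}, while the existence of a family of wave trains is a center argument anchored at the always-present equilibrium $(\mu,\eta)$.

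\emph{Classification.} The right-hand sides of \eqref{cd2:4} are products, so every equilibrium lies at an intersection of the vertical lines $u=\mu$, $\{\varphi_1(u)=0\}$, $\{\psi_1(u)=0\}$ with the horizontal lines $v=\eta$, $\{\varphi_2(v)=0\}$, $\{\psi_2(v)=0\}$, and each of the four lines cut out by a zero of a $\varphi$ or a $\psi$ is invariant under the flow (on it one of $u_\zeta,v_\zeta$ vanishes identically). These invariant lines, together with the nullclines $u=\mu$, $v=\eta$ and the separatrices of the saddles among the equilibria, partition the phase plane into orbit cells whose corners are equilibria of \eqref{cd2:4}; cf.\ Fig.~\ref{f1:1}. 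By Proposition~\ref{pr2:1} every such corner is a saddle or node, or a center or saddle, and by the existence of the first integral $H$ the flow of \eqref{cd2:4} admits neither limit cycles nor nontrivial recurrence. Hence each cell is either foliated by orbits joining its corner equilibria --- producing a heteroclinic orbit between points with distinct $u$-coordinate (a front in $u$), with equal $u$-coordinate (an impulse in $u$), the analogous statements in $v$, or the homoclinic loop bounding a center cell --- or it is filled by a one-parameter family of closed orbits around a center. Reading these alternatives through the correspondence recalled in the Introduction (front $\leftrightarrow$ heteroclinic with distinct coordinate, impulse $\leftrightarrow$ heteroclinic with equal coordinate or homoclinic, wave train $\leftrightarrow$ closed orbit) yields exactly the list (i)--(iv).

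\emph{A center at $(\mu,\eta)$.} Since $\mu$ and $\eta$ are free parameters, we may place the always-present equilibrium $(\mu,\eta)$ at any point of $\Omega$, and, restricting to the complement of finitely many lines, at one disjoint from all other equilibria. Evaluating the Jacobi matrix \eqref{cd2:7} at $(\mu,\eta)$ kills $a_{11}$ and $a_{22}$ (each carries a factor $v-\eta$ or $u-\mu$), leaving the off-diagonal matrix
\[
DF(\mu,\eta)=\begin{pmatrix} 0 & d_1\varphi_1(\mu)\varphi_2(\eta)\\ d_2\psi_1(\mu)\psi_2(\eta) & 0 \end{pmatrix},
\]
whose eigenvalues satisfy $\lambda^2=d_1d_2\,\varphi_1(\mu)\varphi_2(\eta)\psi_1(\mu)\psi_2(\eta)=d_1d_2\,f(\mu,\eta)g(\mu,\eta)$ after the factorization \eqref{cd2:1} (in the rational case the extra positive factor introduced to clear denominators only rescales $\lambda$ and does not affect its being purely imaginary). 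Under the hypothesis $d_1d_2 fg<0$ on $\Omega$ this forces $\lambda$ purely imaginary and nonzero, so $(\mu,\eta)$ is not a saddle; by Proposition~\ref{pr2:1}(2) it is therefore a center.

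\emph{From the center to a family of wave trains.} The step where the center--focus problem could otherwise intervene is the passage from linear center to genuine center, and it is integrability that resolves it: since $fg\neq0$ near $(\mu,\eta)$, the first integral $H$ is analytic there, and its Hessian at $(\mu,\eta)$ is diagonal with entries $d_1\varphi_2(\eta)/\psi_2(\eta)$ and $-d_2\psi_1(\mu)/\varphi_1(\mu)$, whose product has the sign of $-d_1d_2 f(\mu,\eta)g(\mu,\eta)>0$; thus $H$ has a strict, nondegenerate local extremum at $(\mu,\eta)$, so the level curves $\{H=\mathrm{const}\}$ nearby are closed, i.e.\ $(\mu,\eta)$ is encircled by a one-parameter family of periodic orbits of \eqref{cd2:4} (this is the mechanism behind Remark~2 and \cite[p.~75]{Bautin1990}). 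Finally, on a neighborhood of $(\mu,\eta)$ inside $\Omega$ the time change $d\zeta=c\,d\xi/(d_1d_2fg)$ is a diffeomorphism onto its image, as the denominator does not vanish; pulling the periodic orbits back through it yields bounded periodic orbits of the wave system \eqref{cd1:2}, that is, a family of wave train solutions of \eqref{cd1:1} whose orbits lie in $\Omega$ (in fact, every point of $\Omega$ is surrounded by such orbits for a suitable choice of $\mu,\eta$). The only genuinely delicate point is the last paragraph's transition from linear center to true center, and analyticity of $f$ and $g$ in $\Omega$ --- which makes $H$ a bona fide analytic first integral with a strict extremum --- is precisely the hypothesis that makes it go through.
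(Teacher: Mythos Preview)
Your argument is correct and follows the same line as the paper's: the classification is obtained by combining Proposition~\ref{pr2:1}, the first integral $H$, and the product structure of \eqref{cd2:4} (invariant lines and orbit cells), exactly as the paper sketches in the paragraph preceding the theorem. For the second assertion you in fact supply more than the paper does: the paper states the existence of wave trains in $\Omega$ without a dedicated argument, while you carry out the center computation at $(\mu,\eta)$ explicitly---checking that $\lambda^2=d_1d_2 f(\mu,\eta)g(\mu,\eta)<0$ and that $H$ has a nondegenerate extremum there---which is precisely the mechanism the paper alludes to in Remark~2.
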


\section{Examples}

\paragraph{Example 1.} The simplest possible separable model \eqref{cd1:1}
is obtained if we put
\begin{equation}\label{cd3:1}
f_1(u)=f_2(v)=g_1(u)=g_2(v)=1.
\end{equation}
The integral in this case is given by
$$
d_2(u-\mu)^2-d_1(v-\eta)^2=\mbox{const}.
$$
From which it follows that system \eqref{cd1:1} with \eqref{cd3:1}
has no traveling wave solution if $d_1d_2>0$ and has a family of
traveling trains if $d_1d_2<0$.

\paragraph{Remark.} Note that in case $d_1d_2<0$ system \eqref{cd1:1}
is equivalent to the beam equation $u_{tt}=-Du_{xxxx},\,D>0$, which,
how it is well known, possesses wave solutions.

\paragraph{Example 2.} Let
\begin{equation}\label{cd3:2}
\begin{split}
f_2(v)&=g_1(u)=\mbox{const}\\
f_1(u)&=(u-a)^n,\,g_2(v)=(v-b)^m,\,n,m\in\mathbb N.
\end{split}
\end{equation}
It is straightforward to prove
\begin{proposition}\label{pr3:1}
Separable model \eqref{cd1:1} with \eqref{cd3:2} has a family of
wave trains if $d_1d_2<0$ and has a front-front solution, that
corresponds to a heteroclinic connecting node and saddle, if
$d_1d_2>0$.
\end{proposition}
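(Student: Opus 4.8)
The plan is to feed \eqref{cd3:2} into the reduction of Section~2 and then read off everything from Proposition~\ref{pr2:1} and Theorem~\ref{th2:1}. Taking the two constants in \eqref{cd3:2} to be $1$ without loss of generality we have $f(u,v)=(u-a)^n$, $g(u,v)=(v-b)^m$, so after integrating \eqref{cd1:2} and choosing the change of independent variable that clears the (now polynomial) denominators, the wave system \eqref{cd2:4} takes the form
\begin{equation*}
u_\zeta=d_1(u-a)^n(v-\eta),\qquad v_\zeta=d_2(v-b)^m(u-\mu).
\end{equation*}
The things to record first are that the straight lines $u=a$ and $v=b$ are invariant for this system, that (under the nondegeneracy $a\ne\mu$, $b\ne\eta$ built into Proposition~\ref{pr2:1}) the only finite singular points are $P_0=(\mu,\eta)$ and $P_1=(a,b)$, the latter being the crossing point of the two invariant lines, and that the integral $H$ of the excerpt here becomes an elementary function (logarithmic for $n=m=1$, rational otherwise), which is handy at the end.

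Second, I would classify the two singular points. At $P_0$ the Jacobi matrix \eqref{cd2:7} is anti-diagonal with $\lambda^2=d_1d_2(\mu-a)^n(\eta-b)^m$, so $P_0$ is a saddle when this quantity is positive and, by part~2 of Proposition~\ref{pr2:1} together with Remark~2 (integrability excludes a focus), a genuine center when it is negative. At $P_1$ the Jacobi matrix is diagonal; for $n=m=1$ its eigenvalues are $d_1(b-\eta)$ and $d_2(a-\mu)$, whose product again has the sign of $d_1d_2(\mu-a)(\eta-b)$, so $P_1$ is a node exactly when $P_0$ is a saddle and a saddle exactly when $P_0$ is a center. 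For larger $n$ or $m$ the linearization at $P_1$ degenerates, but the dynamics along the two invariant lines through $P_1$ (or Remark~1) shows $P_1$ is still of node/saddle (possibly saddle--node) type with the same correspondence. Placing the integration constants so that $(\mu-a)^n(\eta-b)^m>0$ — automatic when $n,m$ are even — the sign of $d_1d_2(\mu-a)^n(\eta-b)^m$ is that of $d_1d_2$, and we conclude: $d_1d_2<0$ gives a center at $P_0$ and a saddle at $P_1$, while $d_1d_2>0$ gives a saddle at $P_0$ and a node at $P_1$.

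If $d_1d_2<0$ the conclusion is immediate: $P_0$ is a center, so a punctured neighbourhood of $P_0$ is foliated by closed orbits of \eqref{cd2:4}; each is a bounded periodic orbit of the wave system, hence a periodic travelling wave, and together they form the required family of wave trains. If $d_1d_2>0$ the only equilibria are a saddle and a node, so no closed orbit can exist (a closed orbit would lie inside one of the quadrants cut out by the invariant lines and would have to enclose singular points of total index $+1$, whereas those quadrants contain at most the saddle $P_0$); hence there are no wave trains, and by the classification in Theorem~\ref{th2:1} the surviving possibilities are impulse--front, front--front, impulse--impulse. The impulse options are ruled out because they would require a heteroclinic between singular points sharing a $u$- or a $v$-coordinate (which $P_0$ and $P_1$ do not) or a homoclinic loop (impossible for the same index reason). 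Thus the wave system must carry a heteroclinic from $P_0$ to $P_1$, which I would produce by checking that the rectangle with opposite vertices $P_0$ and $P_1$ (e.g.\ $[a,\mu]\times[b,\eta]$ after fixing signs) is positively invariant — each side lies on an invariant line or is crossed inward by the field — so that the branch of the saddle's invariant manifold entering this compact set has the node $P_1$ as its limit, by Poincar\'e--Bendixson together with the absence of closed orbits and homoclinic loops just noted. Since this orbit joins $P_0=(\mu,\eta)$ to $P_1=(a,b)$, which differ in both coordinates, it is simultaneously a front in $u$ and a front in $v$; by the correspondence recalled in the Introduction it is exactly a front--front solution connecting a node and a saddle.

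The genuine work is that last step: establishing existence — not merely non-exclusion — of the $P_0$--$P_1$ heteroclinic when $d_1d_2>0$, since the level set of $H$ through the saddle is itself unbounded and conservation of $H$ does not by itself keep the separatrix from running off to infinity inside the quadrant; it is the product structure of the right-hand sides that makes the trapping rectangle available and closes the argument. A lesser annoyance is the sign bookkeeping with $\mu-a$ and $\eta-b$ for odd $n,m$, together with the fact that condition $(C3)$ fails once $n$ or $m$ exceeds $1$, so for those powers one must invoke Remark~1 and the invariant-line analysis in place of Theorem~\ref{th2:1} read literally.
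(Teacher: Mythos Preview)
Your argument is correct and is precisely the route the paper has in mind: the paper gives no proof beyond ``It is straightforward to prove,'' and the intended verification is exactly what you carry out---feed \eqref{cd3:2} into the wave system \eqref{cd2:4}, classify the two equilibria $(\mu,\eta)$ and $(a,b)$ via Proposition~\ref{pr2:1}, and invoke integrability/Theorem~\ref{th2:1}. Your trapping-rectangle plus Poincar\'e--Bendixson step for the heteroclinic and your caveats about the sign of $(\mu-a)^n(\eta-b)^m$ and the failure of $(C3)$ for $n,m>1$ are more careful than anything the paper spells out.
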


\paragraph{Example 3.} Let
\begin{equation}\label{cd3:3}
\begin{split}
f_1(u)&=u^n,\,f_2(v)=\frac{1}{v^m}\,,\\
g_1(u)&=\frac{1}{u^n}\,,\,g_2(v)=v^m,
\end{split}
\end{equation}
where $m,n\in \mathbb N$. After a suitably chosen change of
independent variable in the wave system \eqref{cd2:2} we obtain
\begin{equation}\label{cd3:4}
\begin{split}
u_{\zeta}&=d_1u^{2n}(v-\eta),\\
v_{\zeta}&=d_2v^{2m}(u-\mu).
\end{split}
\end{equation}
Having in mind biological applications of the pure cross-diffusion
models we assume that $\mu>0,\,\eta>0$ and analyze the phase plane
in the first quadrant. System \eqref{cd3:4} has two singular points:
$(0,0)$ and $(\mu,\,\eta)$. The point $(\mu,\,\eta)$ can be a saddle
(if $d_1d_2>0$) or a center ($d_1d_2<0$). If we restrict our
attention only to the cases $\mu>0,\,\eta>0$, $u,v>0$, then
traveling wave solutions of \eqref{cd1:1} with \eqref{cd3:4} are
exactly the same as in Proposition \ref{pr3:1}.

\begin{figure}[tb!]
\centering
\includegraphics[width=0.46\textwidth]{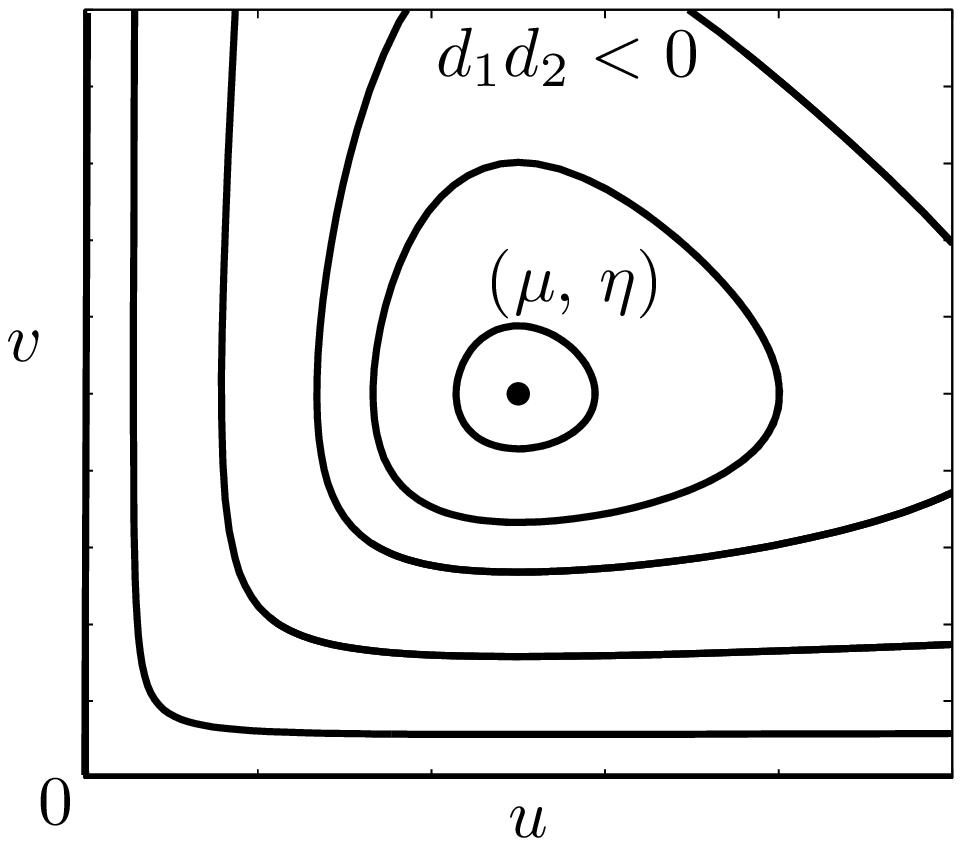}
\includegraphics[width=0.46\textwidth]{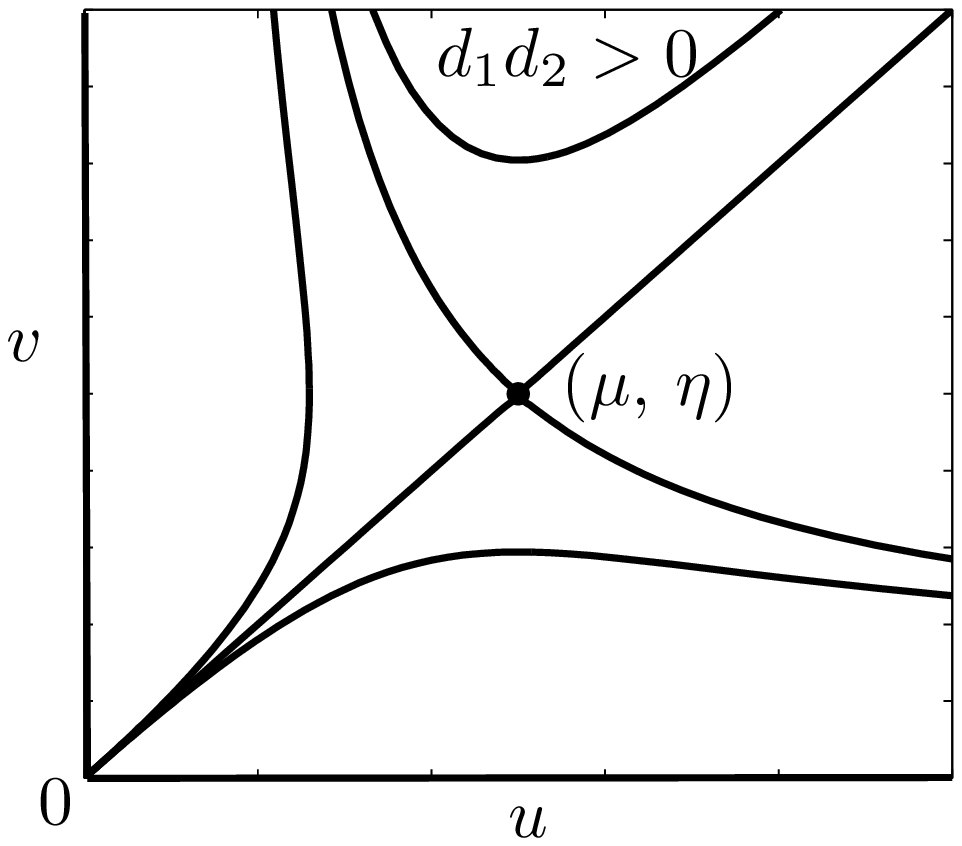}
\caption{The structure of the first quadrant of the phase plane of
the wave system \eqref{cd1:2}, \eqref{cd3:3}. In case $d_1d_2<0$ a
family of closed orbits can be found, which corresponds to the wave
trains; in case $d_1d_2>0$ there is only one bounded trajectory that
connect the origin and the saddle point $(\mu,\,\eta)$}\label{f3:1}
\end{figure}

Let us illustrate this case. In Fig. \ref{f3:1} the first quadrant
of the phase plane of the wave system of \eqref{cd1:1},\eqref{cd3:3}
is shown for two topologically nonequivalent cases. The first case
corresponds to the wave train in the original PDE system, and the
second case corresponds to front-front solution. In Fig. \ref{f3:2}
numerically obtained solutions of \eqref{cd1:1},\eqref{cd3:3} are
presented. To approximate cross-diffusion term an ``upwind''
explicit scheme was used \cite{morton2005nsp} which is often used
for cross-diffusion systems (e.g., \cite{Tsyganov2004a}).

\begin{figure}[b!]
\centering
\includegraphics[width=0.65\textwidth]{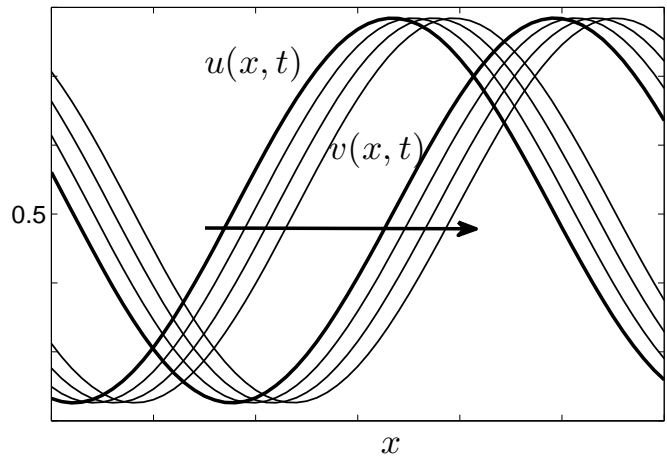}\\
\caption{Numerically obtained wave train solutions of system
\eqref{cd1:1},\eqref{cd3:3} at time moments $t_0=0$(bold
curves)$<t_1<t_2<t_3$ through equal time intervals. The arrow shows
the direction of the waves}\label{f3:2}
\end{figure}

\paragraph{Example 4.} We can use obtained results to prove
existence of solutions of a special form when simple reaction terms
are presented. For example, let us consider the following system:
\begin{equation}\label{cd3:5}
\begin{split}
u_t&=b_1u+d_1(u^n/v^mv_x)_x,\\
v_t&=b_2v+d_2(v^m/u^nu_x)_x.
\end{split}
\end{equation}
If the constant $b_1,\,b_2,\,n,\,m$ satisfy the following equality:
$$
b_1(1-n)=b_2(1-m),
$$
then by the change
$$
u(x,t)=e^{b_1t}z(z,t),\,v(x,t)=e^{b_2t}w(x,t)
$$
system \eqref{cd3:5} can be reduced to system \eqref{cd1:1},
\eqref{cd3:3}. Thus system \eqref{cd3:5} has wave train solutions
with amplitude changing with time. It is interesting to note that
for some models of the form \eqref{cd3:5} an exact solution can be
found.

\paragraph{Example 5.}Consider the system
\begin{equation}\label{cd3:6}
\begin{split}
u_t&=u+d_1v_{xx},\\
v_t&=-\gamma v+d_2u_{xx}.
\end{split}
\end{equation}
\begin{proposition}
For $\gamma<1$ and $d_1+d_2>0$ system \eqref{cd3:6} has a family of
wave trains.
\end{proposition}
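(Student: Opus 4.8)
The plan is to fit \eqref{cd3:6} into the separable framework of Section~2 and then read off a family of wave trains from the center condition of the associated wave system. First I would pass to the traveling-wave variable $\xi=x+ct$, under which \eqref{cd3:6} becomes the linear wave system
\begin{equation*}
cu_\xi=u+d_1v_{\xi\xi},\qquad cv_\xi=-\gamma v+d_2u_{\xi\xi}.
\end{equation*}
Since the reaction is linear with constant coefficients and the cross-diffusion coefficients are constant, this is, up to the reaction terms, of the type treated above. Following the device of Example~4 I would strip the reaction by the substitution $u=e^{t}z$, $v=e^{-\gamma t}w$ (here $b_1=1$, $b_2=-\gamma$), reducing \eqref{cd3:6} to a pure cross-diffusion system whose wave system is of the separable form \eqref{cd2:4} and carries a single interior singular point.

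The next step is to determine the type of that interior singular point. By Proposition~\ref{pr2:1} it can only be a center or a saddle, so I would compute the trace and determinant of the linearization there; their signs are controlled by $\gamma$ through the reaction part and by the cross-diffusion data through the off-diagonal entries. The point is a center---and is therefore surrounded by a one-parameter family of closed orbits, i.e.\ wave trains in the sense of the correspondence recalled in the Introduction---exactly when the trace vanishes and the determinant is positive. Carrying out this computation and simplifying under the standing assumption $\gamma<1$ (which fixes the sign of the reaction contribution) should collapse the center condition to the single inequality $d_1+d_2>0$. Because the reduced system is integrable, and indeed Hamiltonian in the relevant variables, the linear center is a genuine center, so no higher-order (Bautin-type, cf.\ Remark~2) mechanism can destroy the family; existence of the wave trains then follows from Theorem~\ref{th2:1}.

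The main obstacle I anticipate is the sign bookkeeping in the reduction. One must track the exponential amplitude factors $e^{t}$ and $e^{-\gamma t}$ carefully, so that the closed orbits of the reduced wave system correspond to bounded, possibly time-modulated, wave-train solutions of the original PDE rather than to spurious growing modes, and one must verify that the center condition really reduces to $d_1+d_2>0$ (a condition on the symmetric part of the cross-diffusion matrix) rather than to a condition on the product $d_1d_2$. Pinning down this correspondence and the exact inequality, with $\gamma<1$ used to control the reaction sign, is the delicate part; once both are established the conclusion is immediate from Theorem~\ref{th2:1}.
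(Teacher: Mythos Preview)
Your plan has a genuine gap: the Example~4 substitution does not reduce \eqref{cd3:6} to a pure cross-diffusion system of the form \eqref{cd1:1}. If you set $u=e^{t}z$, $v=e^{-\gamma t}w$ at the PDE level, the reaction terms do cancel, but you obtain
\[
z_t=d_1e^{-(1+\gamma)t}w_{xx},\qquad w_t=d_2e^{(1+\gamma)t}z_{xx},
\]
with explicitly time-dependent coefficients; the compatibility condition $b_1(1-n)=b_2(1-m)$ of Example~4, specialized to constant cross-diffusion ($n=m=0$), forces $1=-\gamma$, so the trick works only at the single value $\gamma=-1$, not on the whole range $\gamma<1$. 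If instead you first pass to the traveling-wave ODE in $\xi$, there is no $t$ left to absorb, so the substitution is not even defined. Either way the reduction to the framework of Section~2 fails, and Proposition~\ref{pr2:1} and Theorem~\ref{th2:1} are not available.

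There is a second, structural problem: even if you could land in a constant-coefficient pure cross-diffusion system (Example~1), the center condition coming from Theorem~\ref{th2:1} is $d_1d_2<0$, a condition on the \emph{product}, and your own caveat about ``$d_1+d_2>0$ versus $d_1d_2$'' is exactly the point where the argument would break. The inequality $d_1+d_2>0$ together with $\gamma<1$ is not a phase-plane center condition in the sense of Section~2; it arises from the interplay between the linear reaction and the cross-diffusion that your reduction tries to remove. The paper's proof bypasses the framework entirely: it substitutes the explicit sinusoidal ansatz $u(\xi)=\cos k\xi$, $v(\xi)=\sin k\xi$ with $\xi=x+ct$ directly into \eqref{cd3:6}, obtains two algebraic relations in $k$ and $c$, and solves them to get $k^2=(1-\gamma)/(d_1+d_2)$ and a corresponding $c$. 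The hypotheses $\gamma<1$ and $d_1+d_2>0$ are precisely what make $k$ real, and the resulting one-parameter family (by phase shift) gives the wave trains.
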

\begin{proof}
We will prove proposition by presenting a solution of \eqref{cd3:6}
in explicit form.

Let $u(\xi)=\cos k\xi,\,v(\xi)=\sin k\xi,\,\xi=x+ct$. Then system
\eqref{cd3:6} transforms to the equations
\begin{equation*}
    \begin{split}
-ck\sin k\xi&=\sin k\xi-d_1k^2\sin k\xi,\\
ck\cos k\xi&=-\gamma \cos k\xi-d_2 k^2\cos k\xi,
\end{split}
\end{equation*}
from which
\begin{equation*}
    \begin{split}
d_1k^2-ck-1&=0,\\
d_2k^2+ck+\gamma&=0
\end{split}
\end{equation*}
follows. For fixed $d_1,\,d_2,\,\gamma$ the solution of the last
system is given by
$$
k=\frac{\sqrt{1-\gamma}}{d_1+d_2},\,c=-\frac{d_2+d_1\gamma}{\sqrt{(d_1+d_2)(1-\gamma)}},
$$
which is defined for $\gamma<1$ and $d_1+d_2>0$, which completes the
proof.
\end{proof}

\paragraph{Example 6.} An obvious question is what kind of solutions
can appear in pure cross-diffusion models if we do not demand that
the functions $f(u,v)$ and $g(u,v)$ can be represented as the
product of functions of one variable. Despite the fact that the
separable models possess very reach structure of possible traveling
wave solutions, simple extension of the models considered can be
shown to yield a different type of behavior.

Consider the model \eqref{cd1:1} with the functions given by
\begin{equation}\label{cd3:7}
\begin{split}
f(u,v)&=\frac{u}{\alpha u+v},\\
g(u,v)&=\frac{v}{u+\beta v},
\end{split}
\end{equation}
where $\alpha$ and $\beta$ are positive constant.

The wave system, after suitable change of independent variable,
reads
\begin{equation}\label{cd3:8}
\begin{split}
u_{\zeta}&=d_1(v-\eta)u(\alpha u+\eta),\\
v_{\zeta}&=d_2(u-\mu)v(u+\beta v).
\end{split}
\end{equation}

Assume that all the parameters, except for $d_1,\,d_2$ are positive.
Singular point $(\mu,\,\eta)$ is a saddle, if $d_1d_2>0$ and a
center at linear approximation for $d_1d_2<0$. However, contrast to
model in Example 3, and due to the fact that the wave system is now
non-integrable, we cannot make conclusion on the type of the
singular point in the case $d_1d_2<0$. If we calculate the first
Lyapunov value $l_1$ \cite{Bautin1990}, we obtain
$l_1=d_1d_2\mu\eta(\alpha\beta-1)(\alpha\mu^2-\beta\eta^2)$, which
shows that for parameter values where $l_1\neq0$ this singular point
can be either stable or unstable nonlinear focus.

\begin{figure}[t!]
\centering
\includegraphics[width=0.6\textwidth]{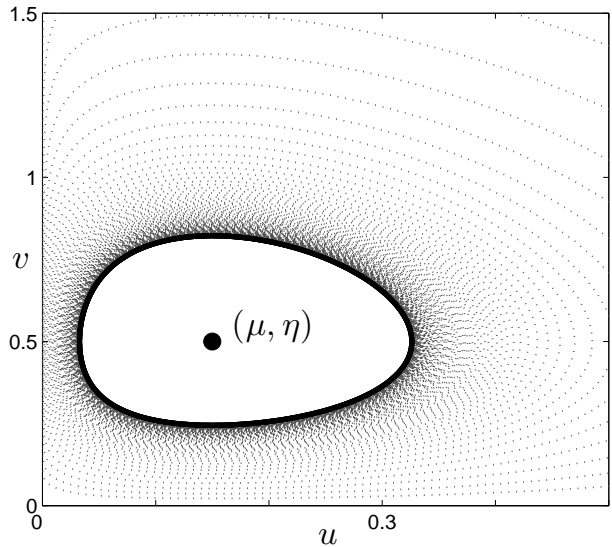}
\caption{Limit cycle in the wave system \eqref{cd3:8}. Parameter
values
$d_1=-1,\,d_2=1,\,\alpha=0.7,\,\beta=0.2,\,\mu=0.3,\,\eta=0.5$}\label{f3:3}
\end{figure}
Numerical experiments show that at least for some parameter values
there is a limit cycle surrounding the singular point (see Fig.
\ref{f3:3}). Thus, in this example, there is a wave train solution
of the initial pure cross-diffusion system.

\section{Conclusions}
In the present note we have shown that pure cross-diffusion models
possess very rich structure of possible traveling wave solutions
even in a rather simplified, though still realistic, case of
separable models, when the nonlinear cross-diffusion coefficients in
system \eqref{cd1:1} can be represented by product of functions
depending on one variable.

A distinct feature of pure cross diffusion models is the presence,
under some additional condition, of the family of wave train
solutions (such solutions were not found for several classed of
cross-diffusion models describing taxis, see
\cite{BerezovskayaInPress,Berezovskaya2007}). As numerical
experiments show, such wave trains can be observed at least for some
time interval. Together with families of wave trains, other typical
cases of traveling waves can be observed. In particular,
impulse-front, front-front, and impulse-impulse solutions are also a
possibility.

We emphasize that the separable model \eqref{cd1:1}, even when the
values of parameters $\mu,\eta,c$ are fixed, possesses, in general,
a family of traveling wave solutions; there are infinitely many
bounded orbits of \eqref{cd1:2} that correspond to traveling wave
solutions.

The presented results can be used for construction and analysis of
different mathematical models describing systems where directional
movement is of particular importance.

\end{document}